\documentclass[11pt,a4paper]{article}
\usepackage[margin=25mm]{geometry}
\usepackage[T1]{fontenc}
\usepackage{lmodern}
\usepackage{microtype}
\usepackage{amsmath,amssymb,amsthm,mathtools}
\usepackage{booktabs,tabularx,array}
\usepackage{enumitem}
\usepackage{tikz}
\usetikzlibrary{arrows.meta,positioning,calc,shapes.geometric}
\usepackage{caption}
\usepackage{graphicx}
\usepackage{xcolor}
\PassOptionsToPackage{hyphens}{url}
\usepackage[colorlinks=true,linkcolor=blue!45!black,citecolor=blue!45!black,urlcolor=blue!45!black]{hyperref}
\hypersetup{pdftitle={A State-Space Model of Figured-Bass Realization},pdfsubject={Local musical constraints and polynomial-time realization},pdfauthor={Evan Unit Lim}}
\definecolor{voiceblue}{RGB}{34,80,125}
\definecolor{voicegreen}{RGB}{35,112,86}
\definecolor{voiceorange}{RGB}{169,88,22}
\definecolor{voicepurple}{RGB}{112,57,133}
\newtheorem{definition}{Definition}[section]
\newtheorem{proposition}[definition]{Proposition}
\newtheorem{theorem}[definition]{Theorem}
\newtheorem{corollary}[definition]{Corollary}

\newcommand{\ind}{\mathbf{1}}
\newcommand{\voices}{\mathcal{V}}
\newcommand{\states}{\mathcal{S}}
\newcommand{\notes}{\mathcal{P}}
\newcommand{\allowed}{\mathcal{M}}
\newcommand{\pc}{\operatorname{pc}}

\newcommand{\nt}[1]{\textup{#1}}
\newcommand{\figbass}[2]{\ensuremath{\begin{smallmatrix}#1\\#2\end{smallmatrix}}}
\newcommand{\voicing}[4]{(\nt{#1},\nt{#2},\nt{#3},\nt{#4})}
\newcolumntype{Y}{>{\raggedright\arraybackslash}X}
\setlist{itemsep=2pt,topsep=5pt}
\title{A State-Space Model of Figured-Bass Realization\\[3pt]
\large Local Constraints, Coupled Voices, and Polynomial-Time Solvability}
\author{Evan Unit Lim\\[3pt]\normalsize National Taiwan Normal University\\[2pt]\normalsize\href{mailto:elim@ntnu.edu.tw}{\texttt{elim@ntnu.edu.tw}}}
\date{}

\begin{document}
\maketitle
\vspace{-1.5em}
\begin{abstract}
Figured-bass realization can be described as a sequence of choices constrained
both within each sonority and between successive sonorities. This paper gives
an explicit mathematical model of a restricted, examination-style four-part
realization problem. Pitch spelling, range, chord membership, doubling,
omission, spacing, crossing, overlap, melodic motion, consecutive perfect
intervals, and selected resolution requirements are expressed as predicates.
We distinguish hard constraints from optional preference costs. Four labeled
notes are represented visually as the vertices of a quadrilateral and
computationally as one ordered voicing state. Legal progressions become paths
through a layered graph. We prove that feasibility and minimum-cost
realization are polynomial-time problems for a fixed number of voices with
explicit finite note domains and fixed local rules. For fixed ranges, a fixed
note alphabet, and adjacent-event rules, the number of graph operations is
linear in the number of events. Worked two-, four-, and eight-beat examples
illustrate legality, optimization, and the failure of a greedy choice. The
result concerns the stated formal model; it is not a claim that every musical
judgment is captured by local predicates.
\end{abstract}

\section{Scope and the realization question}

A figured-bass exercise supplies a bass line and interval indications, and
asks for compatible upper parts. The decisions are coupled: a note that is
acceptable in one voice can become unacceptable when another voice is chosen.
Nevertheless, coupling alone does not establish computational intractability.
The decisive questions are how many choices must be remembered at once and
how far a rule reaches along the score.

\begin{center}
\begin{minipage}{\linewidth}
\centering
\includegraphics[height=43mm]{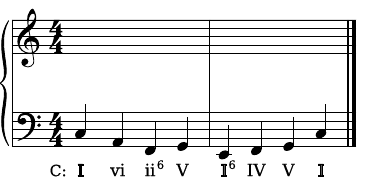}
\captionof{figure}{An original, unsolved exercise in C major, with space for the upper parts. The supplied bass and inversion figures are supplemented by Roman numerals for the reader; they do not specify a realization. Ordinary $5/3$ is left implicit. The tenor would be added above the bass on the lower staff, and soprano and alto on the blank upper staff. Section~\ref{sec:eight} supplies the boundary conditions and a worked realization of this bass.}
\label{fig:exercise}
\end{minipage}
\end{center}

Figured-bass realization is educationally useful because it brings chord
construction and the movement of individual voices into one practical task.
Karpinski argues that realizing basses and using figures to understand voice
leading should accompany the study of harmony \cite[Section 3]{karpinski}.
Reme\v{s} describes a historically informed approach in which figured bass,
chord spelling, voice leading, harmonic analysis, and keyboard and vocal
practice develop together as transferable skills \cite{remes}. The broader
integration of harmonic and linear thinking is also central to Aldwell,
Schachter, and Cadwallader's textbook \cite{aldwell}. These accounts motivate
studying the coordination required by the exercise; the present paper does
not test the educational effectiveness of a particular teaching method.

Figured-bass exercises also appear in music examinations worldwide, in
several distinct assessment traditions. ABRSM's Music Theory specification
includes four-part realization for SATB or keyboard in its Grade 6 content
\cite[pp.~19, 27]{abrsm}. College Board's AP Music Theory examination includes
part-writing from figured bass: its 2024 Question 5 rubric evaluates chord
spelling, spacing, doubling, and voice leading \cite{aptheory}. The Royal
Conservatory's ARCT Harmony \& Counterpoint syllabus includes keyboard-style
figured-bass realization \cite[p.~30]{rcm}. In Australia, AMEB's examiner
guidance explicitly addresses four-part vocal progressions over a supplied
figured bass \cite[p.~6]{ameb}. These examples document international use;
they do not imply that every music curriculum or examination applies the
same rules, texture, or marking scheme. For example, the AP question includes
a suspension, whereas our worked examples admit chord tones only. Its
point-based assessment also differs from the movement cost minimized here
\cite{aptheory}.

We study a deliberately stated SATB convention, rather than asserting one
universal set of examination rules. Upper-voice spacing and doubling appear
in both pedagogical treatments \cite[Sections 26.6--26.7]{hutchinson} and
examination rubrics \cite{aptheory,ameb}; the exact predicates below define
the problem analyzed here. Different syllabuses can replace these predicates without
changing the graph construction, provided their information requirements
remain within the stated assumptions. This is an expository formalization;
no claim of algorithmic novelty is made.

Initially, each event contains one sounding note in each voice, and all voices
advance together. In the worked examples an event is one beat. More generally,
the event grid should include every relevant onset, release, or change of
figure. A held note is represented by equality across successive events.
The number of events, not the notated number of bars, is the input-length
parameter. The examples have no rests, ornaments, or suspensions.

\begin{definition}[Decision and optimization versions]
Given a normalized bass-and-figures sequence of $n\ge1$ events, finite note
domains, prescribed endpoint restrictions, and a fixed local rule set, \emph{feasibility} asks whether at least
one realization satisfies every hard rule. \emph{Minimum-cost realization}
asks for a feasible realization minimizing a specified sum of local costs.
Its associated decision version asks whether such a realization has cost at
most a supplied threshold $K$.
\end{definition}

The class $\mathsf{P}$ formally concerns decision problems. We prove that
feasibility and the threshold version are in $\mathsf{P}$, and also give a
polynomial-time algorithm that constructs an optimal realization.

\subsection{Related work and contribution}

Computational harmonization has a substantial history. Ebcioglu's CHORAL
system represents musical knowledge through logical rules viewed at several
levels, including individual lines and the harmonic structure, and uses
backtracking with musical heuristics to harmonize a supplied melody
\cite{ebcioglu}. Its aim is stylistic chorale generation, with a broader
knowledge representation than the adjacent-event predicates studied here.

Pachet and Roy survey harmonization as a finite-domain constraint problem,
explicitly including figured bass and problems with a supplied voice
\cite{pachetroy}. They discuss local vertical and horizontal constraints,
finite-state approaches, and the use of chord variables whose domains are
constructed from permitted note combinations. That last construction is
especially close to the complete-voicing states below: grouping interacting
notes into one state is an established computational idea, not a new
harmonization algorithm introduced by this paper. The present account makes
the spelling, endpoint conditions, costs, and dependence on voice count and
memory length explicit, and connects these definitions to both staff notation
and the quadrilateral visualization.

Learned approaches address a different objective. DeepBach learns a model
from Bach chorales and generates music through pseudo-Gibbs sampling while
allowing user-imposed musical constraints \cite{deepbach}. Here the admissible
set and objective are specified before solving; the result is an exact
feasibility or minimum-cost guarantee for that formal set. It does not
establish stylistic equivalence to Bach, listener preference, or superiority
over an existing generation system.

The contribution is therefore an accessible mathematical formulation and a
parameter-explicit proof for a restricted exercise model. The tractability
argument rests on a fixed number of interacting voices and bounded temporal
dependence. Describing a musical task as a constraint problem alone does not
determine its complexity, and this proof does not classify every task treated
in the broader harmonization literature.

\section{Notes, figures, and complete voicings}
\label{sec:objects}

\subsection{Pitch must retain spelling}

Let $\notes$ be a finite set of spelled, octave-specific notes. A note is a
triple $p=(\ell,a,o)$ consisting of letter index $\ell\in\{0,\ldots,6\}$
for C through B, accidental $a\in\mathbb Z$, and octave $o$. The permitted
triples are explicitly bounded by the note domain. Set
\begin{align}
 m(p)&=12(o+1)+s_\ell+a,
 & (s_0,\ldots,s_6)&=(0,2,4,5,7,9,11),\label{eq:pitch}\\
 d(p)&=7o+\ell,
 & c(p)&=(\ell,a).
\end{align}
Thus $m(\nt{C4})=60$, $d$ records diatonic position, and $c$ records spelling
without octave. A bare letter such as $C$ in a chord-tone list abbreviates
the spelled class $(0,0)$, and similarly for the other letter names and
accidentals; it is not an integer semitone class. The sounding pitch class is $\pc(p)=m(p)\bmod12$.
Enharmonic notes may agree in $m$ while differing in $d$ or $c$; interval
quality and tonal function therefore cannot be recovered from $m$ alone.

Write the labeled voices in increasing register order as
$\voices=(B,T,A,S)$. A complete voicing is the ordered tuple
\begin{equation}
 X_t=(x_{B,t},x_{T,t},x_{A,t},x_{S,t})\in\notes^4.
 \label{eq:state}
\end{equation}
Write $m^{(4)}(X_t)=(m(x_{B,t}),m(x_{T,t}),m(x_{A,t}),m(x_{S,t}))$
for its componentwise semitone encoding. A voicing is not an unordered chord or a set of pitch classes. Register, voice
identity, and repeated notes all matter. Because the bass is supplied,
$x_{B,t}=b_t$; only three coordinates are normally selected.

\subsection{Normalizing the bass and figures}

At event $t$, let $H_t=(q_{1,t},\ldots,q_{\kappa_t,t})$ list the permitted distinct
spelled chord tones without octave. Let $\allowed_t$ be a set of permitted
multiplicity vectors in that order. We also record harmonic context and
explicit resolution flags where needed. These data form a normalized event
description $h_t$. In particular, $h_t$ supplies a spelled tonic class
$\tau_t$, any active leading-tone class $\lambda_t$, and the resolution flags
used below. The index $\kappa_t=|H_t|$ counts chord tones, not voices.

For an ordinary figure $f$, the letter relation above the bass obeys
\begin{equation}
 d(p)-d(b_t)\equiv f-1\pmod7.
 \label{eq:figure}
\end{equation}
The key signature and the figure's alterations determine the relevant
accidental. This congruence describes an interval class; by itself it does
not require a particular register or enforce the presence of the indicated
tone. Those requirements belong to the domain and multiplicity constraints.
Conventions that distinguish a literal ninth from a second need the
corresponding register or resolution condition as well.

For example, in C major, C3 with $\figbass{5}{3}$ gives $H_t=(C,E,G)$,
whereas E3 with $\figbass{6}{3}$ gives the same chord-tone collection but fixes
a different bass. Unfigured basses use the exercise's stated defaults.
Altered figures can introduce chromatic notes, so a fixed key does not imply
a seven-pitch-class note universe.

\begin{center}
\begin{minipage}{\linewidth}
\centering
\includegraphics[height=40mm]{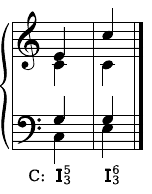}
\captionof{figure}{The normalization example in C major: C3 under I$^{5}_{3}$ and E3 under I$^{6}_{3}$ both give the spelled collection $H_t=(C,E,G)$. The expanded figures are shown here to expose the intervals above each bass. These are separate illustrative voicings, not an asserted legal voice-leading connection.}
\label{fig:normalization}
\end{minipage}
\end{center}

This paper analyzes realization \emph{after} this finite figure interpretation.
It does not treat unrestricted natural-language interpretation as a subproblem.
If several harmonic interpretations are allowed, their labels must be added
to the state whenever they imply different future obligations. Two identical
note tuples cannot then be merged solely on the basis of pitch.

\subsection{Shorthand conventions in figured-bass notation}
\label{sec:shorthand}

The tables in this paper deliberately display the expanded interval figures
so that readers can see the input to the mathematical model. Conventional
notation often abbreviates them. In the elementary triadic context assumed
here, an unmarked bass note normally implies $\figbass{5}{3}$; the two numbers
are usually omitted entirely. A first-inversion triad is commonly marked
with $6$ alone, with the third understood. This is an omission of written
symbols, not permission to omit the corresponding chord tone. Hutchinson
describes the common inversion symbols and their expanded forms
\cite[Sections 16.2 and 16.5]{hutchinson}.

\begin{table}[htbp]
\centering
\begin{tabular}{lll}
\toprule
Chord position & Expanded interval figures & Common shorthand\\
\midrule
Root-position triad & $\figbass{5}{3}$ & No figure\\[5pt]
First-inversion triad & $\figbass{6}{3}$ & $6$\\[5pt]
Second-inversion triad & $\figbass{6}{4}$ & $\figbass{6}{4}$\\[5pt]
Root-position seventh chord & $\begin{smallmatrix}7\\5\\3\end{smallmatrix}$ & $7$\\[5pt]
First-inversion seventh chord & $\begin{smallmatrix}6\\5\\3\end{smallmatrix}$ & $\figbass{6}{5}$\\[5pt]
Second-inversion seventh chord & $\begin{smallmatrix}6\\4\\3\end{smallmatrix}$ & $\figbass{4}{3}$\\[5pt]
Third-inversion seventh chord & $\begin{smallmatrix}6\\4\\2\end{smallmatrix}$ & $\figbass{4}{2}$ (also $2$)\\[5pt]
\bottomrule
\end{tabular}
\caption{Common pedagogical abbreviations. Written shorthand is expanded
before chord membership and multiplicity are tested. Historical notation
and explicit alterations may require additional context.}
\label{tab:shorthand}
\end{table}

An accidental written without a numeral commonly modifies the implied third;
an accidental beside a numeral modifies that interval. Such alterations,
and indications for suspensions or changing intervals over a held bass,
must be retained during normalization. An arbitrary unfigured historical
bass does not uniquely determine its harmony: the default used here belongs
to the stated exercise convention. In the score figures, Roman numerals
identify the interpreted harmony and attached figures identify its inversion:
for example, ii$^6$ means a first-inversion supertonic triad. Root-position
triads normally appear as I, IV, or V, with $5/3$ implicit; $6/3$ is abbreviated
to $6$. Figure~\ref{fig:normalization} deliberately shows the expanded form,
as do the input tables. The prefix ``C:'' identifies C major. These analytical
labels are supplied as reading aids, not as claims that Roman numerals form
part of every historical figured-bass source. The letter-and-octave tables and graphs
remain a parallel, notation-independent presentation of the same notes.

\begin{table}[htbp]
\centering\small
\begin{tabularx}{\linewidth}{lY}
\toprule
Notation & Meaning\\
\midrule
$p,\notes;\ m,d,c$ & Spelled note and finite note domain; semitone position,
diatonic position, and spelled class.\\
$\voices=(B,T,A,S),\ X_t$ & Ordered voices and complete voicing at event $t$.\\
$b_t,h_t,H_t,\allowed_t$ & Supplied bass, normalized context, chord-tone list,
and permitted multiplicities.\\
$\tau_t,\lambda_t$ & Local tonic and active leading-tone spelled classes.\\
$A_t,\states_t$ & Within-event hard penalty and vertically legal states.\\
$B_t,S_t,W_t$ & Transition hard penalty, preference cost, and joint weight.\\
$\mathcal I,\mathcal F$ & Permitted initial and final states.\\
$J,D_t$ & Total cost and minimum prefix cost.\\
$n,M,K$ & Number of events, upper-voice domain-size bound, and cost threshold.\\
\bottomrule
\end{tabularx}
\caption{Notation used in the model. Hard penalties take values $0$ or
$+\infty$; preference costs are finite. The precise constraints and recurrence
are defined in the following sections.}
\label{tab:notation}
\end{table}

\section{Constraints within a voicing}
\label{sec:vertical}

\subsection{Range, crossing, and spacing}

The example convention uses the following inclusive ranges:
\begin{center}
\begin{tabular}{lcc}
\toprule
Voice & Note range & Semitone bounds $[L_v,U_v]$\\
\midrule
Bass & E2--E4 & $[40,64]$\\
Tenor & C3--G4 & $[48,67]$\\
Alto & G3--D5 & $[55,74]$\\
Soprano & C4--G5 & $[60,79]$\\
\bottomrule
\end{tabular}
\end{center}

\begin{center}
\begin{minipage}{\linewidth}
\centering
\includegraphics[width=.96\linewidth]{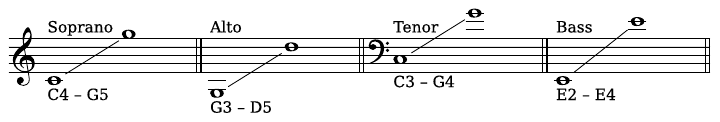}
\captionof{figure}{The inclusive voice ranges used in this paper on one
continuous staff, with double barlines separating the voices. A straight
line joins each lowest and highest note to indicate the intervening range,
not a glissando. All pitches sound as written; the clef changes from treble
to bass for tenor and bass. These are range endpoints, not chords or a
progression.}
\label{fig:ranges}
\end{minipage}
\end{center}
For every voice, require
\begin{equation}
 L_v\le m(x_{v,t})\le U_v.
\end{equation}
No crossing means
\begin{equation}
 m(x_{B,t})\le m(x_{T,t})\le m(x_{A,t})\le m(x_{S,t}).
\end{equation}
Equality is permitted: two distinct voices may occupy the same note.
Spacing is constrained by
\begin{equation}
 m(x_{A,t})-m(x_{T,t})\le12,
 \qquad m(x_{S,t})-m(x_{A,t})\le12.
\end{equation}
There is no additional octave limit between bass and tenor.

\subsection{Membership, doubling, and omission}

In the chord-tone-only model, require $c(x_{v,t})\in H_t$ for every voice.
For each spelled tone $q$, define
\begin{equation}
 N_t(q;X_t)=\sum_{v\in\voices}\ind\{c(x_{v,t})=q\}.
\end{equation}
The joint doubling-and-omission condition is
\begin{equation}
 \bigl(N_t(q_{1,t};X_t),\ldots,N_t(q_{\kappa_t,t};X_t)\bigr)
 \in\allowed_t.
 \label{eq:multiplicity}
\end{equation}
Counts include the bass. They distinguish two occurrences of a tone in
different octaves from two different chord tones.

\begin{table}[htbp]
\centering
\begin{tabularx}{\linewidth}{Yl}
\toprule
Example permission, in root--third--fifth(--seventh) order & Vector\\
\midrule
Complete triad, root doubled & $(2,1,1)$\\
Triad, fifth omitted and root tripled & $(3,1,0)$\\
Complete seventh chord & $(1,1,1,1)$\\
Seventh chord, fifth omitted and root doubled & $(2,1,0,1)$\\
Complete triad, third doubled & $(1,2,1)$\\
\bottomrule
\end{tabularx}
\caption{Examples of multiplicity patterns, not unrestricted musical
permissions. An event admits only the vectors explicitly included in
$\allowed_t$.}
\label{tab:multiplicity}
\end{table}

The worked root-position triads use only $(2,1,1)$. Their first-inversion
exceptions are specified in Section~\ref{sec:eight}. No omissions are enabled in the
worked examples. If $\lambda_t$ is an active leading tone, impose
\begin{equation}
 N_t(\lambda_t;X_t)\le1.
\end{equation}
For the examples, $\lambda_t=B$ in C major. A tonicization-aware model may
instead supply the locally active leading tone in $h_t$.

Let $A_t(X)$ be zero when the prescribed bass, domain, and all enabled
within-event conditions hold, and $+\infty$ otherwise. Define the legal
voicing set by
\begin{equation}
 \states_t=\{X:A_t(X)=0\}.
 \label{eq:legalstates}
\end{equation}

\section{Constraints between voicings}
\label{sec:horizontal}

Fix adjacent states $X=X_t$ and $Y=X_{t+1}$. Write
$\Delta_v=m(y_v)-m(x_v)$ and $\delta_v=d(y_v)-d(x_v)$.
All rules in this section inspect these states together with the supplied
contexts $h_t,h_{t+1}$.

\subsection{Melodic movement and overlap}

The deliberately restrictive melodic convention for the examples permits
repetition, minor and major seconds and thirds, perfect fourths, and perfect
fifths. Define
\begin{equation}
 \mathcal L=\{(0,0),(1,1),(1,2),(2,3),(2,4),(3,5),(4,7)\}.
\end{equation}
For each voice, including the supplied bass, require
\begin{equation}
 (\delta_v,\Delta_v)\in\mathcal L\cup(-\mathcal L).
 \label{eq:melody}
\end{equation}
The signed pairs prevent an augmented or diminished interval from being
mistaken for an allowed interval merely because it has a small semitone span.
An exercise allowing larger melodic leaps would enlarge this relation.

For each neighboring voice pair $u<v$, prohibit overlap by requiring
\begin{equation}
 m(y_u)\le m(x_v),\qquad m(y_v)\ge m(x_u).
 \label{eq:overlap}
\end{equation}
Crossing compares voices within an event; overlap compares their registers
across two events. Neither condition implies the other.

\subsection{Consecutive and direct perfect intervals}

For a lower note $p$ and an upper note $q$, define
\begin{align}
 P_5(p,q)&\iff \exists \rho\in\mathbb Z_{\ge0}:\quad
 \begin{cases}d(q)-d(p)=7\rho+4,\\m(q)-m(p)=12\rho+7,\end{cases}\\
 P_8(p,q)&\iff \exists \rho\in\mathbb Z_{\ge0}:\quad
 \begin{cases}d(q)-d(p)=7\rho,\\m(q)-m(p)=12\rho.\end{cases}
\end{align}
The same octave count must satisfy both equalities in each predicate; this
retains exact spelled interval quality even for unusual accidental values.
These families include compound fifths and octave-equivalent intervals;
$P_8$ includes unisons. Similar motion means $\Delta_u\Delta_v>0$, contrary
motion means $\Delta_u\Delta_v<0$, and both voices move exactly when the
product is nonzero. For every one of the $\binom42=6$ voice pairs, forbid
\begin{equation}
 P_j(x_u,x_v)\land P_j(y_u,y_v)\land(\Delta_u\Delta_v\ne0),
 \qquad j\in\{5,8\}.
 \label{eq:parallel}
\end{equation}
Thus the model prohibits consecutive perfect intervals of the same family
in both similar and contrary motion. Stationary repetitions and oblique
motion are excluded from this prohibition. A fifth followed by an octave
belongs to different families and is not forbidden by this predicate alone.
We reserve ``parallel'' for the similar-motion cases and use ``consecutive''
for the broader rule.

This convention follows the explicit treatment of contrary-motion fifths
and octaves in the AP Music Theory 2024 Question 5 rubric
\cite[Section III.D.1 and DCVLE no.~3]{aptheory}.
AMEB also identifies consecutive fifths and octaves as errors, but the cited
guidance does not separately specify the contrary-motion case \cite{ameb}.
These sources support the stated convention without establishing how nearly
universal it is across examination systems.

\begin{center}
\begin{minipage}{\linewidth}
\centering
\includegraphics[width=.85\linewidth]{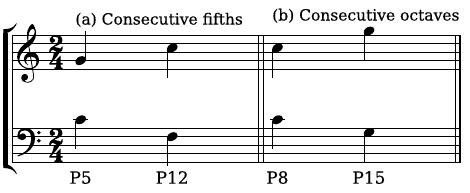}
\captionof{figure}{Consecutive perfect intervals in contrary motion, forbidden
by Equation~\eqref{eq:parallel}. In (a), C4--G4 becomes F3--C5: a perfect
fifth becomes a perfect twelfth (P12), both in the $P_5$ family. In (b),
C4--C5 becomes G3--G5: an octave becomes a double octave (P15), both in the
$P_8$ family. Each lower voice descends while the upper voice ascends.
These isolated voice pairs illustrate interval predicates, not complete
harmonies; interval names therefore replace Roman-numeral analysis.}
\label{fig:contrary-perfect}
\end{minipage}
\end{center}

For direct outer-voice fifths and octaves, forbid
\begin{equation}
 (\Delta_B\Delta_S>0)\land
 \bigl(P_5(y_B,y_S)\lor P_8(y_B,y_S)\bigr)
 \land (|\delta_S|>1).
 \label{eq:direct}
\end{equation}
Thus similar motion into a perfect interval is permitted when the soprano
moves by step. This is a stated convention, not an assertion that every
pedagogical system treats direct intervals identically.

\subsection{Conditional resolution requirements}

Resolution requirements must have explicit triggers. If the context requires
voice $v$'s leading tone $\lambda_t$ to resolve to its tonic $\tau_t$, require
\begin{equation}
 c(y_v)=\tau_t,\qquad
 (\delta_v,\Delta_v)=(1,1).
 \label{eq:leading}
\end{equation}
In the examples this is enabled for every B in a V chord followed by I or
I$^6$ in C major. Frustrated leading-tone resolutions are not allowed in this
particular convention.

When a chordal seventh is required to resolve downward by step, require
\begin{equation}
 (\delta_v,\Delta_v)\in\{(-1,-1),(-1,-2)\}.
\end{equation}
This predicate is available to the model, but the triadic examples do not
activate it. Dissonance treatment and special-chord resolutions can likewise
be expressed as context-conditioned relations. Merely identifying a tone as
a seventh does not establish that immediate downward resolution is mandatory
in every possible context.

Let $B_t(X,Y)$ be zero when all enabled transition conditions hold, and
$+\infty$ otherwise. No musical rule is silently inferred beyond the
conditions specified in the instance and the fixed rule set.

\section{Hard constraints and preference costs}

Hard rules answer whether a realization is admissible. Soft preferences rank
admissible realizations. For illustration, use total upper-voice movement:
\begin{equation}
 S_t(X,Y)=\sum_{v\in\{T,A,S\}}|m(y_v)-m(x_v)|.
 \label{eq:soft}
\end{equation}
The bass is omitted because its motion is supplied and contributes the same
amount to every realization of an instance. This cost does not measure all
aspects of musical quality and is not an examination marking scheme.

Local numerical costs are supplied as nonnegative integers or rationals with
finite binary encodings. The illustrative movement cost is integer-valued.
Let $\mathcal I\subseteq\states_1$ and $\mathcal F\subseteq\states_n$
be the permitted initial and terminal sets. They are given explicitly or by
polynomial-time membership tests, such as fixing an opening voicing or a
final soprano note. An unrestricted boundary uses the entire corresponding
legal-state set. These restrictions do not change the vertically legal sets
$\states_t$. The objective is
\begin{equation}
 J(X_1,\ldots,X_n)=
 \sum_{t=1}^n A_t(X_t)+
 \sum_{t=1}^{n-1}\bigl(B_t(X_t,X_{t+1})+S_t(X_t,X_{t+1})\bigr).
 \label{eq:objective}
\end{equation}
A realization is feasible exactly when $X_1\in\mathcal I$,
$X_n\in\mathcal F$, and $J(X_1,\ldots,X_n)<+\infty$. Optimization minimizes
$J$ over sequences satisfying both endpoint restrictions. For pure feasibility,
set $S_t=0$. Other local preferences may be substituted, including finite
penalties for selected stylistic features. Replacing a hard prohibition with
a finite penalty changes the problem: a sufficiently costly violation may
then be selected. Removing forbidden vertices or edges is exactly equivalent
to the $+\infty$ convention and avoids numerical approximations to infinity.

\section{Why must the graph account for complete voicings?}
\label{sec:coupling}

Start with a graph whose vertices are individual notes and whose edges are
candidate melodic moves. Four labeled travelers follow this graph in
synchrony. Range and individual melodic restrictions can be tested on one
traveler. Doubling, spacing, overlap, and consecutives depend on several
travelers together. Consequently, four independently optimized paths with
fixed individual-edge weights do not, in general, solve the realization
problem.

The parallel-fifths predicate alone displays this dependence. Suppose two
voices initially sing C4 and G4. Allow the lower voice to move to D4 or E4,
and the upper voice to A4 or B4. All four individual moves are permitted
melodic intervals. Testing only parallel fifths gives
\begin{center}
\begin{tabular}{lcc}
\toprule
 & Upper: G4$\to$A4 & Upper: G4$\to$B4\\
\midrule
Lower: C4$\to$D4 & Forbidden & Allowed\\
Lower: C4$\to$E4 & Allowed & Forbidden\\
\bottomrule
\end{tabular}
\end{center}
This isolates one rule; the four destination combinations are not claimed
to be realizations of one prescribed triad. It shows why that rule has a
joint dependence before harmonic filtering.

\begin{center}
\begin{minipage}{\linewidth}
\centering
\includegraphics[width=.96\linewidth]{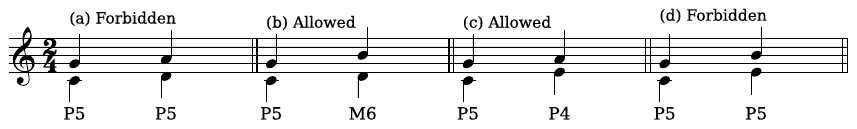}
\captionof{figure}{The four combinations in the parallel-fifths table,
read row by row. Each case starts with C4--G4. In (a), D4--A4 retains a
perfect fifth (P5), as does E4--B4 in (d); both voices move upward, producing
forbidden parallel fifths. In (b), D4--B4 is a major sixth (M6); in (c),
E4--A4 is a perfect fourth (P4). These two cases pass the parallel-fifths
test. Double barlines separate independent cases. Interval labels are used
here because the dyads do not prescribe complete harmonies; no other rule
is being tested.}
\label{fig:parallel-fifths}
\end{minipage}
\end{center}

An additive assignment of finite-or-infinite weights to individual moves
cannot reproduce this table. The two allowed off-diagonal combinations
require all four individual moves to have finite weights; both diagonal
combinations would then also have finite weights. The missing information is
which other move occurs simultaneously.

There are two exact ways to retain it:
\begin{enumerate}
\item Keep the note graph, but evaluate joint constraints on all four
      travelers and their simultaneous moves.
\item Form a graph of complete voicing configurations, making each joint
      transition an ordinary graph edge.
\end{enumerate}
These are representations of the same coupled choices. The second makes
the information needed by a shortest-path algorithm explicit. It is not
necessary to materialize every configuration in memory; they may be generated
when needed. The tuple is sufficient for the adjacent-event model, not a
claim that no more compact equivalent representation exists.

\subsection{The quadrilateral and the dot}
\label{sec:polygon}

Visualize a four-part voicing as a quadrilateral with one labeled corner for
each voice. Each corner carries its octave-specific note. At a coarser scale,
treat the whole quadrilateral as one dot labeled by the ordered tuple.
The dot retains the complete internal configuration; it does not average
the pitches or discard their assignments. Drawing a tuple as one dot does
not reduce the number of possible tuples: distinct assignments remain
distinct candidate nodes.

\begin{center}
\begin{minipage}{\linewidth}
\centering
\begin{tikzpicture}[>=Latex,voice/.style={circle,draw,fill=white,minimum size=8mm,font=\small},
    every node/.style={font=\small}]
\node[voice] (b) at (0,0) {$B$};
\node[voice] (t) at (2.3,0) {$T$};
\node[voice] (a) at (2.3,1.9) {$A$};
\node[voice] (s) at (0,1.9) {$S$};
\draw[thick] (b)--(t) (t)--(a) (a)--(s) (s)--(b);
\draw[dashed,gray] (b)--(a) (t)--(s);
\node[below=2mm of b] {C3};
\node[below=2mm of t] {G3};
\node[above=2mm of a] {C4};
\node[above=2mm of s] {E4};
\draw[->,thick] (3.2,.95)--node[above,align=center]{one configuration}(5.3,.95);
\node[circle,fill=voiceblue,inner sep=3pt] (x) at (6.1,.95) {};
\node[below,align=center] at (6.1,.7) {$X$\\$(\nt{C3},\nt{G3},\nt{C4},\nt{E4})$};
\draw[->] (6.3,.95)--node[above] {joint move}(8.8,.95);
\node[circle,fill=voiceblue,inner sep=3pt] at (9,.95) {};
\node[below] at (9,.7) {$Y$};
\end{tikzpicture}
\captionof{figure}{The quadrilateral is a visual container for four labeled voice
positions. Its sides and diagonals indicate the six possible voice-pair
interactions. A node packages the full tuple; its placement on the page
does not encode pitch or discard internal information.}
\label{fig:polygon}
\end{minipage}
\end{center}

Both diagonals matter: a plain four-edge cycle would omit two voice pairs.
Even the complete pairwise picture does not express everything, since a
multiplicity rule counts all four voices at once. Such a rule belongs to
the whole configuration. If two voices sing in unison, their labeled corners
remain distinct.

Five voices can be drawn as a pentagon, and $k$ voices as a labeled $k$-gon,
with all relevant pairwise links and any whole-configuration predicates.
This is a representational metaphor, not a claim that harmony has Euclidean
polygon geometry. Mathematically the object is a labeled product state.

\section{The layered configuration graph}
\label{sec:graph}

Create one layer for each event, with vertex set
\begin{equation}
 V_t=\{(t,X):X\in\states_t\}.
\end{equation}
The time label matters: the same voicing at different events is a different
vertex. Connect $(t,X)$ to $(t+1,Y)$ exactly when $B_t(X,Y)=0$, and give the
edge weight $S_t(X,Y)$. Equivalently, start with every possible adjacent-layer
edge and assign forbidden edges weight $+\infty$.

Attach a source to each $X\in\mathcal I$ and each $X\in\mathcal F$
to a sink with zero-weight edges. A specified opening voicing or final
soprano note is therefore just a boundary restriction. With no prescribed
boundary, all legal states in the relevant layer are admitted.

\begin{proposition}[Exact correspondence]
For the adjacent-event rule model, feasible realizations correspond
bijectively to finite source-to-sink paths in the layered configuration graph.
The path weight equals the realization's finite preference cost.
\end{proposition}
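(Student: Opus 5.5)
I will define the correspondence map explicitly and check that it is well defined, injective, and surjective. Then I will compare weights term by term against the objective \eqref{eq:objective}. Call $s\to(1,X_1)\to(2,X_2)\to\cdots\to(n,X_n)\to z$ a source-to-sink path, where $s$ is the source and $z$ the sink. Since edges only go from the source to layer $1$, from layer $t$ to layer $t+1$, and from layer $n$ to the sink, every such path visits exactly one vertex per layer. The vertices carry time labels $(t,X)$, so the path is determined by, and determines, the sequence $(X_1,\ldots,X_n)$. The map $\Phi$ I use sends a feasible realization $(X_1,\ldots,X_n)$ to this path.

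\emph{Well-definedness of $\Phi$.} Let $(X_1,\ldots,X_n)$ be feasible, so $X_1\in\mathcal I$, $X_n\in\mathcal F$ and $J<+\infty$. All summands in \eqref{eq:objective} are nonnegative or $+\infty$, and the $S_t$ are finite. Hence $J<+\infty$ forces $A_t(X_t)=0$ for every $t$ and $B_t(X_t,X_{t+1})=0$ for every $t<n$. By \eqref{eq:legalstates} each $(t,X_t)$ is therefore a vertex of $V_t$, and each consecutive pair is an edge. The boundary conditions supply the source and sink edges, because $\mathcal I\subseteq\states_1$ and $\mathcal F\subseteq\states_n$.

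\emph{Surjectivity.} Conversely, take any source-to-sink path and read off the sequence. Vertex membership gives $A_t(X_t)=0$, and edge existence gives $B_t=0$. The source and sink edges give $X_1\in\mathcal I$ and $X_n\in\mathcal F$. Then $J=\sum_t S_t$ is finite, so the sequence is a feasible realization.

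\emph{Injectivity and weights.} Injectivity holds because distinct sequences differ in some coordinate $t$, and the time-labelled vertices $(t,X_t)$ then differ. For the weight, the boundary edges have weight $0$ and the layer edges have weight $S_t(X_t,X_{t+1})$. The path weight is therefore $\sum_{t<n}S_t(X_t,X_{t+1})$, which equals $J$ because all $A_t$ and $B_t$ terms vanish. This also shows that every path is finite, which gives the word ``finite'' in the statement.

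The main obstacle is making precise the hypothesis ``adjacent-event rule model.'' The graph encodes only unary constraints (via $A_t$) and binary constraints on adjacent layers (via $B_t$). I therefore need to confirm that every hard rule of Sections~\ref{sec:vertical} and~\ref{sec:horizontal} is a predicate of a single $X_t$ together with $h_t$, or of the pair $(X_t,X_{t+1})$ together with $h_t,h_{t+1}$. I will check this rule by rule.
\begin{itemize}
\item Membership, multiplicity and the leading-tone count read only $X_t$.
\item Melodic motion, overlap, consecutive and direct perfect intervals, and the resolution conditions read only $(X_t,X_{t+1})$ and the contexts.
\item The endpoint sets are unary restrictions on layers $1$ and $n$.
\end{itemize}
Under this hypothesis the bijection holds exactly. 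If some rule reached further back in time, the layers would have to be augmented with memory, and the argument would fail as stated.
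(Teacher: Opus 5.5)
Your argument is correct and follows the same route as the paper's proof. A feasible realization gives one time-labelled legal vertex per layer, joined by permitted edges; conversely, a path yields a realization satisfying every hard and boundary condition; the time labels make the two constructions mutually inverse; and both weights are the same sum of the $S_t$. Your extra steps spell out what the paper compresses into ``these are all the hard constraints in the model'': you derive $A_t=B_t=0$ from $J<+\infty$, because hard penalties take only the values $0$ and $+\infty$ while the $S_t$ are finite, and you list which rules are single-event and which are adjacent-event.
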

\begin{proof}
A feasible realization chooses one state in each $\states_t$. Its adjacent
pairs pass every transition test, so they supply a path. Conversely, a path
selects a legal state at every event and only permitted transitions between
events. These are all the hard constraints in the model, including the
boundary conditions; thus the resulting realization is feasible. The time
labels and ordered tuples make both constructions inverse to each other.
Finally, both costs sum precisely the same local quantities $S_t$.
\end{proof}

The fixed-note-graph picture is still available: keep a
single universe $\notes$ with candidate moves $\notes\times\notes$, including
self-loops for repeated pitches. At event $t$ the four travelers occupy $X_t$,
and their simultaneous move is assigned the joint weight
$W_t(X,Y)=B_t(X,Y)+S_t(X,Y)$ for $X\in\states_t$ and
$Y\in\states_{t+1}$. If arbitrary note tuples are retained instead, use
$W_t(X,Y)=A_{t+1}(Y)+B_t(X,Y)+S_t(X,Y)$, initialize with $A_1(X_1)$,
and impose the same endpoint restrictions. The layered configuration graph
is the time expansion of these joint states. Weights may depend on time and state, but computing a
weight may not hide an unspecified search through an entire history.

\section{Dynamic programming and polynomial-time solvability}
\label{sec:complexity}

For $X\in\states_t$, let $D_t(X)$ be the smallest cost of a prefix satisfying
the initial restriction and every hard rule through event $t$, ending at $X$.
The terminal restriction is applied only at the last layer. Initialize
$D_1(X)=0$ for $X\in\mathcal I$ and $D_1(X)=+\infty$ for the other
$X\in\states_1$. For each $Y\in\states_{t+1}$, set
\begin{equation}
 D_{t+1}(Y)=\min_{X\in\states_t}
 \left[D_t(X)+B_t(X,Y)+S_t(X,Y)\right].
 \label{eq:dp}
\end{equation}
The minimum over an empty set is $+\infty$. The optimum is
$\min_{X\in\mathcal F}D_n(X)$.
If that value is infinite, the instance is infeasible. Remembering a
minimizing predecessor reconstructs a realization.

\begin{proposition}[Correctness of the recurrence]
Equation~\eqref{eq:dp} computes the minimum prefix cost at every layer.
\end{proposition}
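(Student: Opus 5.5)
The plan is induction on $t$. The claim to prove is that for every $t$ and every $X\in\states_t$, $D_t(X)$ as computed by the initialization and Equation~\eqref{eq:dp} equals
\[
D^*_t(X)=\min\Bigl\{\sum_{s=1}^{t-1}\bigl(B_s(X_s,X_{s+1})+S_s(X_s,X_{s+1})\bigr)\Bigr\},
\]
the minimum being over prefixes $(X_1,\ldots,X_t)$ with $X_s\in\states_s$, $X_1\in\mathcal I$ and $X_t=X$. An empty minimum is read as $+\infty$. Because every prefix state lies in $\states_s$, each $A_s$ term vanishes. Hence this sum is exactly the restriction of $J$ in Equation~\eqref{eq:objective} to the first $t$ events, and it is finite precisely when every hard rule through event $t$ holds.

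For the base case $t=1$, the prefix is the single state $X$. Its cost is the empty sum $0$ when $X\in\mathcal I$, and there is no admissible prefix otherwise. This matches the initialization.

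For the inductive step, assume $D_t=D^*_t$ on $\states_t$ and fix $Y\in\states_{t+1}$. Any admissible prefix ending at $Y$ splits uniquely into an admissible prefix ending at its penultimate state $X\in\states_t$, followed by the last transition. Its cost is therefore the prefix cost plus $B_t(X,Y)+S_t(X,Y)$.

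I will show two inequalities. For $\ge$: any prefix ending at $Y$ costs at least $D^*_t(X)+B_t(X,Y)+S_t(X,Y)$ for its own $X$, so it costs at least the right-hand side of Equation~\eqref{eq:dp}. For $\le$: take a minimizing $X$ in Equation~\eqref{eq:dp}. If the value is finite, extend an optimal prefix for $X$, which exists by the inductive hypothesis because $\states_t$ is finite, by the step to $Y$. This achieves the minimum. If the value is infinite, both sides are $+\infty$.

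The finiteness of the $\states_t$ guarantees that all minima are attained. The orderings and additions in $\mathbb{R}_{\ge0}\cup\{+\infty\}$ behave monotonically, which is what justifies both inequalities. Separately, the final answer $\min_{X\in\mathcal F}D_n(X)$ then equals the optimum of $J$ under both endpoint restrictions. This also follows from the Exact correspondence proposition, which identifies the minimum with the minimum source-to-sink path weight.

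The main obstacle is not the algebra but justifying the decomposition, namely that no constraint couples non-adjacent events. The only quantities involved are $A_s$ (per state), $B_s$ and $S_s$ (per adjacent pair), and the endpoint sets. The terminal restriction is applied only at layer $n$, so it does not interfere with prefixes. I would cite the adjacent-event assumption and the Exact correspondence proposition explicitly to make this principle of optimality airtight. I would also note the $+\infty$ convention, namely that $+\infty$ plus anything is $+\infty$, so that forbidden transitions are never selected when a finite option exists.
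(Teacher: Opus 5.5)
Your proposal is correct and follows the paper's own argument: induction on $t$, with the initialization matching the permitted one-event prefixes, a lower bound obtained by splitting any prefix at its penultimate state $X$ and applying the inductive hypothesis, and a matching upper bound obtained by extending an optimal prefix for a minimizing $X$. Your version only makes explicit what the paper leaves implicit, namely the $+\infty$ arithmetic and the attainment of minima; attainment actually rests on the finiteness of the whole prefix set (all $\states_s$ with $s\le t$), not just on the finiteness of $\states_t$.
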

\begin{proof}
The initialization describes every permitted one-event prefix. Assume the
claim at event $t$. Every feasible prefix ending at $Y$ at event $t+1$ has a
penultimate state $X$ with a permitted transition to $Y$. By the induction
hypothesis its earlier cost is at least $D_t(X)$. Conversely, any finite
$D_t(X)$ is realized by a feasible prefix that can be extended whenever the
transition is legal. Minimizing over $X$ therefore gives exactly the optimum
for $Y$.
\end{proof}

The argument uses a specific information property: with adjacent-event
rules and supplied context, two prefixes ending in the same complete state
have the same permitted continuations. Only the cheaper prefix need be
retained for optimization. This is a shortest-path application of Bellman's dynamic-programming
principle, as developed for routing problems in his 1958 article
\cite{bellman}. Our acyclic, event-layered graph permits a single forward
pass; the recurrence is not a greedy choice of the cheapest next chord.

\begin{theorem}[Polynomial-time realization]
\label{thm:poly}
Suppose there are four labeled voices, the bass and normalized contexts
$h_t$ are supplied and fixed, each upper-voice domain has at most $M$ explicitly listed notes, and all constraints involve
one event or two adjacent events. Suppose also that evaluating the fixed
predicates, endpoint membership tests, and local costs takes polynomial
time in their encoded inputs.
Then feasibility, threshold-cost feasibility, and construction of a
minimum-cost realization have polynomial-time algorithms. With constant-time
local evaluations, the number of candidate-state and transition operations
is $O(nM^6)$.
\end{theorem}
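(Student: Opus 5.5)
We will run the recurrence~\eqref{eq:dp} over the layered configuration graph of Section~\ref{sec:graph} and count the work it performs. The Exact correspondence proposition identifies feasible realizations with finite source-to-sink paths of equal cost. The Correctness of the recurrence proposition shows that $D_t$ is the minimum prefix cost at each layer. Correctness is therefore already available. What remains is to bound the size of the graph and the cost of generating it.

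\emph{Step 1: bounding the layers.} The bass coordinate is fixed to $b_t$. Each of $T,A,S$ ranges over at most $M$ listed notes. Hence at most $M^3$ candidate tuples $X$ exist at each event. For each candidate we evaluate $A_t(X)$: range, crossing, spacing, membership, the multiplicity test~\eqref{eq:multiplicity}, and the leading-tone count. These are fixed predicates on a four-tuple and on $h_t$, so each evaluation takes polynomial time in the encoded inputs. Enumerating candidates and filtering them gives $\states_t$ with $|\states_t|\le M^3$, using $O(M^3)$ predicate calls per layer.

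\emph{Step 2: bounding the transitions.} Between layers $t$ and $t+1$ there are at most $|\states_t|\,|\states_{t+1}|\le M^6$ pairs. For each pair we evaluate $B_t(X,Y)$ (melodic relation~\eqref{eq:melody}, overlap, consecutives over the six voice pairs, direct intervals, and the triggered resolutions) and the cost $S_t(X,Y)$. Each such check inspects only $X$, $Y$, $h_t$, and $h_{t+1}$. This is exactly the adjacent-event hypothesis, so no history search is hidden. Summing over the $n-1$ transitions and the $n$ layers gives $O(nM^3+nM^6)=O(nM^6)$ state and transition operations. Each operation also performs one addition and one comparison of $D$-values. Endpoint membership for $\mathcal I$ and $\mathcal F$ adds $O(M^3)$ tests.

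\emph{Step 3: the three problems.} Feasibility holds exactly when $\min_{X\in\mathcal F}D_n(X)<+\infty$. The threshold version compares this minimum with $K$. For construction, we store a minimizing predecessor at every vertex and trace back from an optimal terminal state, which costs $O(n)$ further steps.

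\emph{Main obstacle: the bit complexity of the $D$-values.} We must show that these numbers remain polynomially bounded, so that the arithmetic is polynomial and not just the operation count. Each finite $D_t(X)$ is a sum of at most $n-1$ local costs, each of which is produced by a polynomial-time evaluation. Its encoding length is therefore polynomial in the input, and summing $n$ such numbers adds only $O(\log n)$ bits plus the growth of common denominators.

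For rational costs, the denominators deserve care. We will bound the encoding of any sum of at most $n$ input rationals by bringing all costs to a common denominator. The product of all the denominators has bit length equal to the sum of their bit lengths, which is polynomial. Representing every cost over this common denominator and working with integer numerators then keeps all $D$-values polynomial in size.

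Finally, the input size is at least of order $n+M$, since the domains are explicitly listed. The bound $M^6\cdot n$ is then polynomial in the input size, and this gives the theorem.
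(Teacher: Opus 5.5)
Your proposal is correct and follows the paper's proof: at most $M^3$ filtered states per layer, at most $M^6$ candidate transitions per adjacent pair, $O(nM^6)$ total evaluations and relaxations, the terminal minimum for feasibility and its comparison with $K$ for the threshold version, and predecessor pointers for reconstruction in $O(n)$ steps. Your bit-complexity step is the same common-denominator argument the paper gives in its encoding remarks after the theorem, where the product is taken over the at most $n-1$ denominators on one path to get $O(nL)$ bits; note that the bit length of a product is \emph{at most}, not equal to, the sum of the factors' bit lengths, which is all you need.
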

\begin{proof}
The three upper voices produce at most $M^3$ candidate tuples per event.
Filtering them costs at most $nM^3$ local state evaluations. Each pair of
successive layers has at most $M^3M^3=M^6$ candidate transitions. Generating
these and applying Equation~\eqref{eq:dp} takes at most $(n-1)M^6$ local transition
evaluations and relaxations. Hence the stated operation bound follows. If
each evaluation has polynomial cost, multiplying by that cost preserves
polynomial time. The final minimum answers feasibility and, by comparison
with $K$, threshold feasibility. Predecessor pointers reconstruct an
optimal realization in another $O(n)$ steps.
\end{proof}

Two adjacent distance layers require $O(M^3)$ stored values. Storing
predecessors for reconstruction requires $O(nM^3)$ entries. The graph need
not be stored explicitly. If the bass were also chosen, the analogous
elementary bound would be $O(nM^8)$, still polynomial for four voices.

If harmonic interpretation is also a choice, suppose at most $H$ explicitly
listed context labels are available per event. Retaining the chosen label
with the note tuple gives at most $HM^3$ states and $O(nH^2M^6)$ candidate
connections, provided compatibility and transition tests remain polynomial.
This counts labeled realizations: two label sequences can describe the same
sequence of pitches. The theorem's $M^3$ state bound assumes one supplied
context per event. Succinctly encoded or unrestricted interpretation systems
need a separate complexity analysis.

\begin{corollary}[Fixed ranges and fixed local rules]
\label{cor:fixed}
With a fixed finite spelled-note alphabet, fixed vocal ranges, and fixed
adjacent-event predicates, feasibility requires $O(n)$ graph operations.
Minimum-cost realization also requires $O(n)$ graph operations for fixed
local cost evaluation.
\end{corollary}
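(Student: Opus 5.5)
The plan is to specialize Theorem~\ref{thm:poly} by showing that, under the corollary's hypotheses, the parameter $M$ and the cost of every local evaluation are bounded by constants that do not depend on $n$. The bound $O(nM^6)$ on candidate-state and transition operations then collapses to $O(n)$.

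First, I will bound $M$. The spelled-note alphabet $\notes$ is a fixed finite set, and each upper-voice domain is contained in it, further cut down by the fixed range bounds $[L_v,U_v]$. So $M\le|\notes|$, and this is a constant. In the example convention one can say more: each voice admits only finitely many spelled notes within its semitone interval once accidentals are drawn from a fixed alphabet. In either case $M^3$ states per layer and $M^6$ candidate transitions per adjacent pair are constants.

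Second, I will bound the cost of each local operation. The hard predicates are fixed: range, crossing, spacing, membership and multiplicity~\eqref{eq:multiplicity}, melodic relation~\eqref{eq:melody}, overlap~\eqref{eq:overlap}, consecutives~\eqref{eq:parallel}, direct intervals~\eqref{eq:direct}, and resolution~\eqref{eq:leading}. Each inspects one or two tuples drawn from a finite alphabet, together with context $h_t$. The point requiring care is that $h_t$ must also range over a finite set. Its chord-tone lists, multiplicity sets, and tonic and leading-tone classes are all built from the fixed alphabet, so each predicate is a finite table lookup and costs $O(1)$.

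Third, I will handle the arithmetic of the recurrence~\eqref{eq:dp}. For feasibility, set $S_t=0$ as in the paper; then $D_t$ takes values in $\{0,+\infty\}$, and each relaxation is a constant-time Boolean operation. For minimum cost, the corollary's phrase ``fixed local cost evaluation'' is read as making each $S_t$ evaluation and each comparison or addition a unit graph operation. With the illustrative cost~\eqref{eq:soft}, prefix costs are at most $3(n-1)\max_v(U_v-L_v)$. This has $O(\log n)$ bits, so even in the bit model the algorithm is only polylogarithmically worse than linear. The statement counts graph operations, though, so the stated bound is unaffected.

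Finally, I will collect the counts: $nM^3=O(n)$ state evaluations, $(n-1)M^6=O(n)$ transition evaluations and relaxations, a final minimum over at most $M^3$ terminal states, and $O(n)$ predecessor steps for reconstruction. The main obstacle is the second step: arguing cleanly that the contexts $h_t$ and the endpoint tests contribute only constant work. If needed, I will make this an explicit reading of ``fixed local rules'': endpoint membership tests are fixed predicates over the finite alphabet, and any event-dependent data is drawn from a finite set.
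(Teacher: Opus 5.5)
Your proposal is correct and follows the paper's route: the paper gives no separate proof, and the corollary comes from specializing Theorem~\ref{thm:poly}, where a fixed finite alphabet and fixed ranges make $M$ a constant, so $O(nM^6)$ collapses to $O(n)$ graph operations. Your remarks on the finiteness of the contexts $h_t$, constant-time endpoint tests, and the bit cost of prefix sums match the paper's own caveat that the linear claim counts graph operations, not bit time.
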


There are two encoding qualifications. First, $M$ must be an explicit domain
size or a fixed constant. A range of exponentially many pitches described by
two binary endpoints cannot silently be treated as an explicitly listed
domain. Second, arithmetic has a bit cost. If nonnegative integer local
costs have at most $L$ bits, path sums need $O(L+\log n)$ bits. For rational
costs with numerators and positive denominators of at most $L$ bits, a path
uses at most $n-1$ such costs. The product of their denominators is a common
denominator of $O(nL)$ bits, and the corresponding summed numerator has
$O(nL+\log n)$ bits. Exact addition and comparison therefore remain
polynomial-time operations. The graph operation bound remains as above, and
the total bit complexity is polynomial, including comparison with the
binary-encoded threshold $K$.
The linear claim is a graph-operation claim, not an unqualified assertion
of linear bit time for arbitrarily large numeric weights.

Even when the number of complete realizations grows exponentially with $n$,
the algorithm does not enumerate them: it merges prefixes that have the same
sufficient ending state. The fixed number of interacting voices is central.
For $k$ voices with one supplied bass, the corresponding bound is
$O(nM^{2(k-1)})$ local operations. It is polynomial in $n$ and $M$ for each
fixed $k$, but exponential in $k$ when $M>1$. That observation does not itself
prove hardness when $k$ is variable.

\paragraph{Finite hearing range does not bound the number of parts.}
A finite physical ensemble has a finite number of performers and notated
parts. In a complexity analysis, however, variable $k$ means arbitrarily
large \emph{finite} instances, not a literally infinite orchestra. Hearing
range alone does not supply a universal constant upper bound on $k$.
First, a bounded interval of frequencies is not itself a finite pitch
alphabet; a finite tuning grid and spelling convention must also be chosen.
Second, even if the resulting alphabet has $D$ notes, multiple labeled parts
may occupy the same note. Adding more instruments in unison does not require
any new pitch. Thus the finite-pitch assumption can bound the domain size
without bounding the number of travelers in the graph.

If one additionally required every part to occupy a distinct pitch at every
event, the pigeonhole principle would give $k\le D$. Orchestral doubling and
the permitted unisons in this paper do not satisfy that extra restriction.
Nor does inability to hear every simultaneous line separately remove those
lines from the notated score. A fixed ensemble size $k\le k_{\max}$ can legitimately
be imposed as an application-specific assumption, but it must be stated
independently of hearing range. This preserves the fixed-$k$ conclusion of
Corollary~\ref{cor:fixed}, without extending it to unrestricted orchestral
part counts merely on acoustic grounds.

\section{Worked examples}
\label{sec:examples}

All examples use C major, the ranges in Section~\ref{sec:vertical}, the exact
melodic relation in Equation~\eqref{eq:melody}, all six consecutive-perfect-interval tests, the overlap and
direct-interval tests, and strict leading-tone resolution on V--I and
V--I$^6$. Consecutive perfect intervals in contrary motion are prohibited
as specified in Equation~\eqref{eq:parallel}; no additional melody-recovery
rule is enabled. A beat is one simultaneous event. The opening voicing is fixed to
\begin{equation}
 X_1=\voicing{C3}{G3}{C4}{E4},\qquad
 m^{(4)}(X_1)=(48,55,60,64).
\end{equation}
\begin{center}
\begin{minipage}{\linewidth}
\centering
\includegraphics[height=34mm]{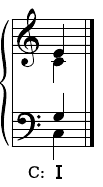}
\captionof{figure}{The fixed opening voicing $X_1$ as a short score: bass C3, tenor G3, alto C4, soprano E4. In C major this is root-position I, with the root doubled. Upper stems belong to soprano and tenor; lower stems belong to alto and bass.}
\label{fig:opening}
\end{minipage}
\end{center}

Each example is a separate instance: octave choices in a supplied bass are
part of its input and need not agree across examples. Tables list voices
from soprano down to bass, whereas tuples always run bass to soprano.
Companion score figures use a conventional two-staff SATB short score:
soprano and alto on the treble staff, tenor and bass on the bass staff, with
upper-voice stems up and lower-voice stems down on each staff. Quarter notes
represent the events; the two-beat examples are engraved in $2/4$ and the
four- and eight-beat examples in $4/4$. These metrical groupings are for
presentation and add no metrical constraints to the model.

\subsection{Two beats: legal chords can have an illegal connection}

\begin{center}
\begin{minipage}{\linewidth}
\centering
\includegraphics[height=18mm]{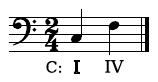}
\captionof{figure}{Two-beat input: bass C3--F3 with harmony I--IV. Both root-position triads imply $5/3$. The three upper voices are still to be selected.}
\label{fig:input-two}
\end{minipage}
\end{center}

Take bass C3--F3 with figures $\figbass{5}{3}$ at both events. Compare the
second voicings $Y=\voicing{F3}{A3}{C4}{F4}$ and
$Z=\voicing{F3}{C4}{F4}{A4}$.
\begin{table}[htbp]
\centering
\begin{tabular}{lccc}
\toprule
 & Beat 1: $X_1$ & Beat 2: $Y$ & Beat 2: $Z$\\
\midrule
Soprano & E4 & F4 & A4\\
Alto & C4 & C4 & F4\\
Tenor & G3 & A3 & C4\\
Bass & C3 & F3 & F3\\
Harmony & I & IV & IV\\
\bottomrule
\end{tabular}
\caption{Both destination chords pass the within-event tests, but only one
of these two displayed connections is legal.}
\end{table}

For $X_1\to Y$, the movements in bass-to-soprano order are $(5,2,0,1)$.
All hard rules pass and the preference cost is $2+0+1=3$. The outer voices
arrive at an octave by similar motion, which is permitted here because the
soprano moves by step.

For $X_1\to Z$, the movements are $(5,5,5,5)$. The bass and tenor move from
C3--G3 to F3--C4, giving parallel perfect fifths. The bass and alto move from
C3--C4 to F3--F4, giving parallel octaves. There is also alto--soprano overlap:
the new alto F4 lies above the previous soprano E4. Thus this connection has
infinite hard-constraint cost despite its individually legal destination.

\begin{center}
\begin{minipage}{\linewidth}
\centering
\begin{tikzpicture}[>=Latex,box/.style={draw,rounded corners=2pt,align=center,font=\small,inner sep=6pt}]
\node[box] (x) at (0,0) {$X_1$\\C3, G3, C4, E4};
\node[box] (y) at (7,1) {$Y$\\F3, A3, C4, F4};
\node[box] (z) at (7,-1) {$Z$\\F3, C4, F4, A4};
\draw[->,thick] (x)--node[above,sloped,font=\small] {legal, cost 3}(y);
\draw[->,dashed] (x)--node[below,sloped,font=\small] {forbidden, $+\infty$}(z);
\end{tikzpicture}
\captionof{figure}{A selected fragment of the two-layer graph. The dashed edge denotes
a rejected candidate and is absent from the feasible graph.}
\end{minipage}
\end{center}

For completeness, exhaustive checking gives exactly three legal second
voicings from this opening:
\begin{equation}
\begin{array}{c|c}
\text{Second voicing in }(B,T,A,S)\text{ order}&\text{Cost}\\\hline
\voicing{F3}{F3}{A3}{C4}&9\\
\voicing{F3}{F3}{C4}{A4}&7\\
\voicing{F3}{A3}{C4}{F4}&3
\end{array}
\end{equation}
The first two use permitted bass--tenor unisons. This list certifies that
the displayed legal choice $Y$ is optimal for the complete two-beat instance,
not merely preferable to the rejected choice $Z$.

\begin{center}
\begin{minipage}{\linewidth}
\centering
\begin{minipage}[t]{.48\linewidth}
\centering
\textbf{(a) Admissible: $X_1\to Y$}\\[4pt]
\includegraphics[height=40mm]{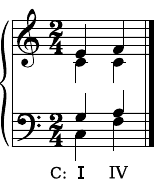}
\end{minipage}\hfill
\begin{minipage}[t]{.48\linewidth}
\centering
\textbf{(b) Rejected: $X_1\to Z$}\\[4pt]
\includegraphics[height=40mm]{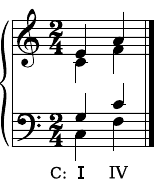}
\end{minipage}
\captionof{figure}{Staff-notation companions to the two-beat table and graph. Both
examples use the supplied bass C3--F3 and imply $5/3$ at both beats. The
right-hand candidate contains the parallel fifths, parallel octaves, and
overlap described in the text; it is shown for comparison, not as an
admissible realization.}
\label{fig:score-two}
\end{minipage}
\end{center}

\subsection{Four beats: a complete local progression}

\begin{center}
\begin{minipage}{\linewidth}
\centering
\includegraphics[height=21mm]{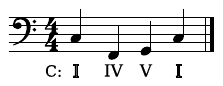}
\captionof{figure}{Four-beat input: the supplied bass C3--F2--G2--C3 and its harmony I--IV--V--I. The root-position figures $5/3$ are implicit.}
\label{fig:input-four}
\end{minipage}
\end{center}

Take the supplied bass C3--F2--G2--C3, with $\figbass{5}{3}$ throughout,
giving I--IV--V--I. One optimal realization is:
\begin{table}[htbp]
\centering
\begin{tabular}{lcccc}
\toprule
Beat & 1 & 2 & 3 & 4\\
\midrule
Soprano & E4 & F4 & D4 & E4\\
Alto & C4 & C4 & B3 & C4\\
Tenor & G3 & A3 & G3 & G3\\
Bass & C3 & F2 & G2 & C3\\
Figures & $\figbass{5}{3}$ & $\figbass{5}{3}$ & $\figbass{5}{3}$ & $\figbass{5}{3}$\\
Harmony & I & IV & V & I\\
Incoming cost & -- & 3 & 6 & 3\\
\bottomrule
\end{tabular}
\caption{A four-beat minimum-cost realization. Its total upper-voice
movement is $3+6+3=12$.}
\label{tab:four}
\end{table}

The final alto B3--C4 satisfies the enabled leading-tone resolution.
All four chords have their roots doubled. There are exactly two feasible
complete realizations from the fixed opening under the stated rules; the
minimum cost is 12. The final soprano is not constrained to tonic, so this
example is not being presented as a perfect authentic cadence.

The state counts make the computation visible:
\begin{center}
\begin{tabular}{lrrrr}
\toprule
Beat & 1 & 2 & 3 & 4\\
\midrule
Vertically legal states $|\states_t|$ & 10 & 8 & 8 & 10\\
Reachable states from the fixed opening & 1 & 2 & 2 & 2\\
Minimum finite prefix cost & 0 & 3 & 9 & 12\\
\bottomrule
\end{tabular}
\end{center}
The algorithm retains a separate value for each reachable state. The row of
minimum prefix costs is only a summary; retaining just those four scalar
minima would lose the information needed to continue the calculation.

\begin{center}
\begin{minipage}{\linewidth}
\centering
\includegraphics[height=44mm]{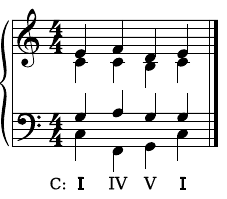}
\captionof{figure}{Staff notation for the four-beat realization in
Table~\ref{tab:four}. The notes and voice assignments are identical to the
table. All four chords use the unmarked $5/3$ convention.}
\label{fig:score-four}
\end{minipage}
\end{center}

\subsection{Eight beats: inversions, a boundary condition, and a greedy trap}
\label{sec:eight}

\begin{center}
\begin{minipage}{\linewidth}
\centering
\includegraphics[height=22mm]{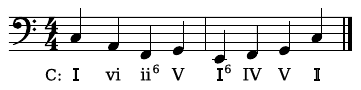}
\captionof{figure}{Eight-beat input: C3--A2--F2--G2--E2--F2--G2--C3, with harmonic analysis and inversion figures. The first-inversion chords at beats 3 and 5 are ii$^6$ and I$^6$; the other triads are in root position.}
\label{fig:input-eight}
\end{minipage}
\end{center}

Consider
\begin{equation}
 \mathrm{I}\;\text{--}\;\mathrm{vi}\;\text{--}\;\mathrm{ii}^{6}
 \;\text{--}\;\mathrm{V}\;\text{--}\;\mathrm{I}^{6}
 \;\text{--}\;\mathrm{IV}\;\text{--}\;\mathrm{V}\;\text{--}\;\mathrm{I},
\end{equation}
with bass C3--A2--F2--G2--E2--F2--G2--C3. The final soprano is required to
be C5. At beat 3, ii$^6$ uses the third-doubled vector $(1,2,1)$ in
$(D,F,A)$ order. At beat 5, I$^6$ uses $(2,1,1)$ in $(C,E,G)$ order.
These explicit inversion choices belong to this example; no general
first-inversion doubling rule is assumed.

\begin{table}[htbp]
\centering
\setlength{\tabcolsep}{5pt}
\begin{tabular}{lcccccccc}
\toprule
Beat & 1 & 2 & 3 & 4 & 5 & 6 & 7 & 8\\
\midrule
Soprano & E4 & A4 & A4 & B4 & C5 & C5 & B4 & C5\\
Alto & C4 & E4 & F4 & D4 & C4 & F4 & D4 & E4\\
Tenor & G3 & C4 & D4 & G3 & G3 & A3 & G3 & G3\\
Bass & C3 & A2 & F2 & G2 & E2 & F2 & G2 & C3\\
Figures & $\figbass{5}{3}$ & $\figbass{5}{3}$ & $\figbass{6}{3}$ & $\figbass{5}{3}$ & $\figbass{6}{3}$ & $\figbass{5}{3}$ & $\figbass{5}{3}$ & $\figbass{5}{3}$\\
Harmony & I & vi & ii$^6$ & V & I$^6$ & IV & V & I\\
Incoming cost & -- & 14 & 3 & 12 & 3 & 7 & 6 & 3\\
\bottomrule
\end{tabular}
\caption{An eight-beat optimal realization with fixed opening and final
soprano C5. The total cost is 48. All notes are natural.}
\label{tab:eight}
\end{table}

Every adjacent upper-voice gap remains within an octave. The soprano B4--C5 resolves
the leading tone at both V--I$^6$ and the final V--I. The bass also satisfies
the specified melodic relation, including the opening descending minor
third and the final ascending perfect fourth. The tenor's D4--G3 is a
permitted descending perfect fifth. No recovery-after-a-leap rule has been
added; enabling such a rule would require rechecking the example.

\begin{center}
\begin{minipage}{\linewidth}
\centering
\begin{tikzpicture}[x=1.45cm,y=.105cm,>=Latex,font=\small]
\foreach \p/\lab in {40/E2,48/C3,60/C4,72/C5}{
  \draw[gray!25] (.8,\p)--(8.2,\p);
  \node[left] at (.8,\p) {\lab};
}
\draw[->] (.8,37)--(8.4,37) node[right]{Beat};
\foreach \t in {1,...,8}{\draw (\t,37)--(\t,36.4);\node[below] at (\t,36.4){\t};}
\draw[voiceblue,thick] plot[mark=*,mark size=1.7pt] coordinates {(1,48)(2,45)(3,41)(4,43)(5,40)(6,41)(7,43)(8,48)};
\draw[voicegreen,thick] plot[mark=square*,mark size=1.7pt] coordinates {(1,55)(2,60)(3,62)(4,55)(5,55)(6,57)(7,55)(8,55)};
\draw[voiceorange,thick] plot[mark=triangle*,mark size=2pt] coordinates {(1,60)(2,64)(3,65)(4,62)(5,60)(6,65)(7,62)(8,64)};
\draw[voicepurple,thick] plot[mark=diamond*,mark size=2pt] coordinates {(1,64)(2,69)(3,69)(4,71)(5,72)(6,72)(7,71)(8,72)};
\node[right,text=voiceblue] at (8.1,48) {$B$};
\node[right,text=voicegreen] at (8.1,55) {$T$};
\node[right,text=voiceorange] at (8.1,64) {$A$};
\node[right,text=voicepurple] at (8.1,72) {$S$};
\node[anchor=west] at (.8,77) {Sounding pitch (equal vertical steps are semitones)};
\end{tikzpicture}
\captionof{figure}{Four synchronized voice trajectories for Table~\ref{tab:eight}. Each
vertical slice specifies one complete configuration node. Connecting lines
indicate voice motion, not continuous glissandi.}
\label{fig:eight}
\end{minipage}
\end{center}

This instance also distinguishes dynamic programming from a locally greedy
method. At beat 2 exactly two voicings are reachable from the fixed opening:
\begin{center}
\begin{tabular}{lcc}
\toprule
Beat-2 voicing in $(B,T,A,S)$ order & Immediate cost & Completion to final C5\\
\midrule
\voicing{A2}{A3}{C4}{E4} & 2 & None\\
\voicing{A2}{C4}{E4}{A4} & 14 & Feasible, minimum total 48\\
\bottomrule
\end{tabular}
\end{center}
Under the stated consecutive-interval rule, the cheaper first move has
no full-length continuation, even without the final C5 restriction. Choosing
only that move therefore loses every complete solution. Dynamic programming retains both
current states until their possible continuations are resolved.

\begin{center}
\setlength{\tabcolsep}{5pt}
\begin{tabular}{lrrrrrrrr}
\toprule
Beat & 1 & 2 & 3 & 4 & 5 & 6 & 7 & 8\\
\midrule
Vertically legal states & 10 & 8 & 8 & 8 & 8 & 8 & 8 & 10\\
Forward-reachable states & 1 & 2 & 5 & 4 & 5 & 7 & 5 & 5\\
Minimum finite prefix cost & 0 & 2 & 5 & 13 & 16 & 21 & 39 & 42\\
\bottomrule
\end{tabular}
\end{center}
These forward counts are taken \emph{before} imposing the terminal soprano
filter. At beat 8 only two reachable states have soprano C5; their minimum
cost is 48. The unrestricted minimum 42 ends at a different soprano pitch
and is not a solution of the specified instance. There are 14 feasible
complete paths satisfying both boundary conditions.

\begin{center}
\begin{minipage}{\linewidth}
\centering
\includegraphics[height=48mm]{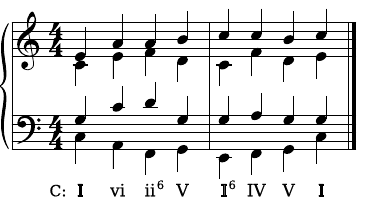}
\captionof{figure}{Staff notation for the eight-beat realization in
Table~\ref{tab:eight}, grouped into two bars of $4/4$. The harmony row labels
every event, with ii$^6$ and I$^6$ at beats 3 and 5; the remaining Roman
numerals imply $5/3$. The soprano ends on the required C5.}
\label{fig:score-eight}
\end{minipage}
\end{center}

\subsection{Verification and interpretation of the examples}

The reported state counts and optima were checked by direct enumeration of
the finite natural-note domains using the predicates printed in this paper.
A separate complete-path traversal, retaining all legal prefixes instead of
merging them, confirmed the feasible-path counts $3$, $2$, and $14$ and the
minimum costs $3$, $12$, and $48$, respectively. This second check verifies
the dynamic-programming optima against exhaustive search for these small
instances; it does not turn their musical conventions into universal rules.
No music-analysis library or external realization implementation was used
to derive the rules, select the notes, or verify feasibility and optimality.
The verified note lists were subsequently encoded in MEI and engraved with
Verovio solely to produce the companion staff-notation figures. Roman numerals
and inversion figures were attached explicitly from the stated harmony,
rather than inferred by a music-analysis library. The supplied bass figures,
full realizations, and range endpoints were checked against their note lists.
The engraving stage does not alter the musical model or its computed results.
A reproduction procedure appears in Section~\ref{sec:reproduce}.

The examples establish concrete behavior of the model. Their finite sizes
do not establish an asymptotic complexity class; Theorem~\ref{thm:poly} supplies that
proof for arbitrary $n$ under the stated assumptions.

\section{Rules extending beyond adjacent events}
\label{sec:extensions}

A state containing only the current voicing is not sufficient for every
part-writing instruction.
For example, a prepared suspension requires information about preparation,
dissonance, and resolution. Recovery after a leap compares two successive
melodic movements. Such rules can still have bounded memory.

As an illustrative recovery convention, if a voice leaps by more than a
third from event $t-1$ to $t$, require a contrary diatonic step next:
\begin{equation}
 |\delta_{v,t-1}|>2\ \Longrightarrow\
 |\delta_{v,t}|=1\ \land\ \Delta_{v,t-1}\Delta_{v,t}<0.
\end{equation}
Here $\delta_{v,t}=d(x_{v,t+1})-d(x_{v,t})$ and
$\Delta_{v,t}=m(x_{v,t+1})-m(x_{v,t})$. Admissible step quality is still checked by the melodic relation. This is an optional, explicitly
chosen rule, not part of the preceding examples. Suspension modeling also
requires admitting the designated non-chord tone during its dissonant event;
adding only a resolution test while retaining strict chord membership would
be inconsistent.

\begin{proposition}[A fixed number of remembered events]
If every constraint and cost inspects at most $r$ consecutive events, with
fixed $r\ge2$, then the configuration-path method still gives a
polynomial-time algorithm for fixed voice count and explicit note domains.
With a fixed note universe it uses $O(n)$ graph operations.
\end{proposition}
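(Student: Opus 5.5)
The plan is to reduce the $r$-event case to the adjacent-event case by enlarging the state, and then apply the dynamic program of Section~\ref{sec:complexity} essentially unchanged. For $t\ge r-1$, define a \emph{window state} $Z_t=(X_{t-r+2},\ldots,X_t)$ consisting of the last $r-1$ complete voicings, each in its respective $\states_s$. For $t<r-1$, use shorter windows beginning at $X_1$. Every constraint or cost term inspecting events $t-r+2,\ldots,t+1$ is then a function of the pair $(Z_t,Z_{t+1})$. Consecutive windows must agree on their $r-2$ overlapping voicings, and this consistency check is itself a local transition test. Each term of the objective is assigned to exactly one transition, namely the first at which all of its inspected events are present, so no term is counted twice. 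The analogue of the objective~\eqref{eq:objective} then becomes a sum of adjacent-window weights $W_t(Z_t,Z_{t+1})$.

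The key steps, in order, are as follows.

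\emph{Exact correspondence.} As in the Exact-correspondence proposition, feasible realizations correspond bijectively to source-to-sink paths in the layered window graph. A realization determines its windows, and a consistent path determines the realization by reading off the newest voicing in each window. The initial and terminal restrictions are imposed on the first and last voicings of the first and last windows.

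\emph{Sufficiency of the window.} With supplied contexts, two prefixes ending in the same window have identical sets of admissible continuations and identical future costs. This is because every later term inspects only events within the last $r-1$ plus new ones. This is the information property used after the Correctness-of-the-recurrence proposition, so that recurrence and its inductive proof carry over verbatim with $X$ replaced by $Z$.

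\emph{Counting.} Each layer has at most $(M^3)^{r-1}=M^{3(r-1)}$ windows. Because consistency fixes all but the newest voicing, each window has at most $M^3$ successors. The total is therefore $O(nM^{3r})$ transition evaluations, which is polynomial for fixed $r$ and fixed voice count. With a fixed note universe, $M$ is constant and the count is $O(n)$, as in Corollary~\ref{cor:fixed}. The bit-complexity remarks after that corollary apply without change, since path sums still involve at most $n$ local costs.

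I expect the main obstacle to be bookkeeping rather than depth. First, the boundary layers $t<r-1$ need explicit treatment. Second, costs must be attached so that each term is counted exactly once. Third, constraints spanning fewer than $r$ events must be evaluated at the correct transition. I would handle all three by padding the beginning of the score with a dummy ``empty'' voicing, so that every layer has uniform window length, and by declaring that each rule is evaluated only when its last inspected event is the newest event of $Z_{t+1}$. It is also worth checking that the per-evaluation cost remains polynomial, since each window has size $O(r)$ voicings and $r$ is fixed.
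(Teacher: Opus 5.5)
Your proposal is correct and follows essentially the same argument as the paper. Both take the state to be the last $r-1$ complete voicings; with $N\le M^3$ states per event, there are at most $N^{r-1}$ history states, each with at most $N$ successors, giving $O(nN^r)=O(nM^{3r})$ candidate extensions and hence $O(n)$ for a fixed note universe. Your additional bookkeeping (overlap consistency, charging each term to exactly one transition, and padding at the start) spells out what the paper compresses into the remark that initialization and terminal checks handle the obligations at the boundaries.
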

\begin{proof}
Use a state consisting of the previous $r-1$ complete voicings. If a single
event has at most $N$ states, there are at most $N^{r-1}$ such history states.
Appending one new voicing gives at most $N$ successors, and every newly
completed rule window can now be evaluated. Thus at most $O(nN^r)$ candidate
extensions are needed, apart from the cost of evaluating the rules. Since
$r$ and the voice count are fixed and $N\le M^3$ for SATB with supplied bass,
the bound is polynomial. Fixed domains make $N$ constant. Initialization and
terminal checks handle obligations at the boundaries.
\end{proof}

More generally, a rule can be represented by a finite memory state, such as
an unresolved-tendency flag. That memory becomes another coordinate of the
configuration. The argument remains polynomial when the additional state
space can be generated and processed in polynomial time. A fixed collection
of finite-state monitors preserves a fixed-width layered model.

An unrestricted requirement relating arbitrarily many distant events cannot
automatically be handled by the one-voicing recurrence. Some global
requirements remain tractable after adding a counter or another compact
summary; others may not. The computational effect must be proved for the
particular rule. Similarly, arbitrary input-dependent rule programs could
hide hard computations inside one transition test and fall outside the
fixed-rule assumptions.

\section{Discussion}

The quadrilateral picture separates two useful levels of description: notes
belong to labeled voices, and a simultaneous assignment of all voices is a
configuration. A transition between configurations is evaluated jointly.
Drawing a configuration as a dot packages this information without making
the voices independent.

Under fixed voice count, finite explicit domains, and local evaluable rules,
the formal realization problem is in $\mathsf{P}$. Fixed physical ranges and
a fixed finite spelling vocabulary give the stronger fixed-width result.
This classification applies to the precisely stated decision problem, and
does not assert that all aspects of good composition, stylistic judgment,
or every examination rubric are local. It also does not equate polynomial
time with small practical constants.

The model provides a basis for further work: additional musical rules can be
written as predicates, assigned the memory they actually require, and tested
against examples. Whenever that memory remains bounded, the same layered
construction applies. Any broader complexity claim should follow from the
resulting formal problem rather than from the mere number of possible scores.

\section*{Acknowledgements}
This research began in 2019 while the author was a mathematics tutor at
Berklee College of Music under the supervision of Dr.\ Jennifer Beauregard.
Parts of the research were conducted while the author was a student at the
College of William and Mary and at the Georgia Institute of Technology,
and while serving as a research associate in Dr.\ Cathy S.\ J.\ Fann's
laboratory at Academia Sinica.

The author acknowledges the use of GPT-6 Astra (OpenAI) in the preparation
of this manuscript. Specifically, the AI tool assisted with the drafting,
revision, and editing of the text, as well as with the literature review.

The mathematical content and proof structures presented in this paper were
initially derived by the author. GPT-6 Astra was used to identify edge cases,
refine mathematical notation, and suggest improvements to the proofs.
All AI-assisted suggestions and revisions were subsequently verified by the
author, who takes full responsibility for the final content.

\appendix
\section{Reproducibility specification}
\label{sec:reproduce}

For the examples, form each voice's domain by retaining only natural notes
whose semitone numbers lie within its inclusive range. At event $t$, restrict
that domain to the chord-tone spellings listed in Table~\ref{tab:inputs}, fix the
bass, enumerate the Cartesian product, and apply every within-event test.
Use the melodic relation, overlap rule, six consecutive-perfect-interval tests, direct outer
interval test, and triggered leading-tone resolutions for each adjacent pair.

\begin{table}[htbp]
\centering
\begin{tabular}{llll}
\toprule
Event type & Ordered chord tones & Multiplicity & Figures\\
\midrule
I & $(C,E,G)$ & $(2,1,1)$ & $\figbass{5}{3}$\\
IV & $(F,A,C)$ & $(2,1,1)$ & $\figbass{5}{3}$\\
V & $(G,B,D)$ & $(2,1,1)$ & $\figbass{5}{3}$\\
vi & $(A,C,E)$ & $(2,1,1)$ & $\figbass{5}{3}$\\
ii$^6$ & $(D,F,A)$ & $(1,2,1)$ & $\figbass{6}{3}$\\
I$^6$ & $(C,E,G)$ & $(2,1,1)$ & $\figbass{6}{3}$\\
\bottomrule
\end{tabular}
\caption{Normalized event templates used in the examples. Each row has a
single permitted multiplicity vector. The bass pitches are supplied in the
corresponding example tables.}
\label{tab:inputs}
\end{table}

\begin{samepage}
\noindent\textbf{Dynamic programming.}
\begin{enumerate}
\item Generate the legal voicing sets $\states_1,\ldots,\states_n$.
\item Set $D_1(X_1)=0$ for the prescribed opening, and set every other
      value to $+\infty$.
\item For each successive layer and each candidate destination $Y$, examine
      all currently reachable predecessors $X$. If the transition is legal,
      compare $D_t(X)+S_t(X,Y)$ with the best recorded value for $Y$ and retain
      a minimizing predecessor.
\item Minimize over the permitted terminal states. Follow predecessor links
      backward to recover one optimal path.
\end{enumerate}
\end{samepage}
\pagebreak[0]

\begin{samepage}\noindent\textbf{Exhaustive comparison for the small examples.}
\begin{enumerate}
\item Start with the one-element path containing the fixed opening.
\item Extend every current path by every legal next voicing. Retain distinct
      paths even when they end at the same state.
\item At the last event, apply the terminal condition, count the surviving
      paths, and calculate the minimum of their summed movement costs.
\end{enumerate}
The terminal condition is unrestricted for two and four beats and is
$x_{S,8}=\nt{C5}$ for eight beats. Counts of vertically legal states are taken
before initial or terminal filtering. Counts of reachable states apply the
fixed opening but, unless expressly stated otherwise, precede terminal
filtering. These conventions are necessary to reproduce the tables.\par
\end{samepage}

\end{document}